\newcommand{\ones}[1]{{\lvert #1 \rvert_{1}}}
\newcommand{\s}{\text{c}}
\newcommand{\remove}[1]{}
\newtheorem{theorem}{Theorem}
\newtheorem{lemma}{Lemma}
\newtheorem{example}{Example}
\newtheorem{corollary}{Corollary}
\theoremstyle{definition}
\newcommand{\Section}[1]{\section{#1}}
\newcommand{\Subsection}[1]{\subsection{#1}}
\begin{document}
\title{Optimal Binary Variable-Length Codes with a Bounded Number of 1's per Codeword: Design, Analysis, and Applications\thanks{This is a full version of a paper submitted to ISIT 2025. This work was partially supported by project SERICS (PE00000014) under the NRRP MUR program funded by the EU-NGEU.}}
\author{\IEEEauthorblockN{Roberto Bruno, Roberto De Prisco, and Ugo Vaccaro \IEEEmembership{Senior Member,\ IEEE }} \\\IEEEauthorblockA{Department of Computer Science, University {of Salerno}, Fisciano (SA), Italy\\
Emails: \{rbruno, robdep, uvaccaro\}@unisa.it}}

\maketitle

\vspace*{-.5cm}
\begin{abstract}
In this paper, we consider the problem of constructing optimal average-length binary 
codes
under the constraint that each codeword must contain at most $D$ ones, where $D$ is 
a given input 
parameter.
We provide an $O(n^2D)$-time complexity algorithm for the construction of such codes, where $n$ is the number of codewords. We also describe several scenarios where the need to design  
these kinds of codes naturally arises. Our algorithms allow us to construct both optimal
average-length prefix binary codes and optimal
average-length alphabetic binary codes. In the former case, our $O(n^2D)$-time algorithm
substantially improves on the previously known $O(n^{2+D})$-time complexity algorithm for the
same problem.
We also provide a Kraft-like inequality for the existence of (optimal) variable-length binary codes, subject to the above-described constraint on the number of 1's in each codeword.

\textit{Index Terms}---Variable-Length Codes, Alphabetic Codes, Binary Search, Dynamic Programming, Kraft-inequality.
\end{abstract}


\Section{Introduction}\label{intr}
Variable-length codes find plenty of applications in data compression \cite{sal};
additionally,  variable-length codes are  deeply interconnected with 
search theory.
Specifically, any search procedure that identifies objects within a given space by sequentially performing appropriate tests inherently defines a variable-length encoding for the elements in that space. To illustrate, each possible test outcome can be represented using a distinct symbol from a finite alphabet. The concatenation of these encoded test outcomes, which ends in identifying an object 
$x$, provides a valid encoding of
$x$,  for any $x$
 in the search space. The relationships between variable-length codes and 
search procedures have been thoroughly investigated; the main results are 
presented in the  excellent monographies \cite{AW,aigner,Picard}.

Among  search procedures, 
the binary search stands out as an important algorithmic technique to solve
the computational  problem whereby an unknown element 
$x$ is located within a finite search space 
$S$, and the objective is to identify 
$x$ by performing queries that yield {\sc Yes/No} answers. These kind of queries typically fall into one of two categories: comparison queries or membership queries.
In the case of comparison queries, the unknown element 
$x$ resides in an ordered set that can be represented as 
$[n]=\{1, \ldots, n\}$
The queries then take the form: "Is the unknown number 
$x$ greater than 
$j$, where $j\in [n]$?" (see \cite{knuth} for an encyclopedic treatment of the subject).
For membership queries, the search space 
$S$ may be unordered. The queries are phrased as: "Does the unknown number 
$x$ belong to a subset $A \subseteq S$?"
This latter problem has also been extensively studied (see, for example, \cite{AW, aigner, mehlhorn}) and has significant connections to a variety of fields, including  Exact Learning \cite{Da}, Combinatorics \cite{katona}, and Group Testing \cite{DuHw}, to name a few.
It is well known that binary search procedures that proceed with comparison queries are mathematically \textit{equivalent} to binary alphabetic codes, and that  binary search procedures that proceed with membership queries are equivalent to binary prefix codes (see, e.g., \cite{AW}).


Among the numerous versions of the basic binary search problem, 
we are interested in the
following variation that arose in the context of
efficient leader election in asynchronous
anonymous networks \cite{vL+}. We first describe the problem, then we show how it gives rise to
our problem of constructing alphabetic binary 
codes
with  the constraint that each codeword must contain at most a given number of $D$ ones.
We quote  \textit{verbatim} from \cite{vL+},  apart from the 
  names of the parameters

\begin{quote}
    ``Consider the case when the unknown number is a positive integer in the interval
$[1,n]$, and the total amounts $D$ of admissible overestimates is predetermined. The
$\langle n,*,D \rangle$--
game \textit{consists} of determining the unknown number by asking as few questions as
possible''.
\end{quote}

\noindent
The questions are of the type: \lq\lq Is the unknown number greater than $j\in [1,n]$?''.
With the term \textit{overestimate}, the authors of \cite{vL+}  intend  
the situation in which the search algorithm,
 receives a \lq{\sc No}' answer to the query.
In fact, if this is  the case, it means that  the value  $j$ is
\textit{greater} than the value of
the unknown number  
one seeks,  hence,  $j$ is an overestimate of the sought number.
The authors of \cite{vL+}  also  pointed out that their
$\langle n,*,D\rangle$--game (as above defined)
corresponds to the bounded-searching problem with variable-cost 
comparisons studied by Bentley and Brown 
in \cite{BenBro}
and  remarked that their results 
about $\langle n,*,D\rangle$--games
lead to improved solutions for the
algorithmic problems studied in \cite{BenBro}.

In different words, the $\langle n,*,D\rangle$--search game problem 
consists in determining the best
\textit{worst-case} comparison-based 
search algorithm $\cal A$ that determines the unknown 
element in $[1,n]$, under the constraints that $\cal A$ never performs a sequence of comparison with more than $D$ overestimates of the unknown number.
Using the well-known correspondence between ({deterministic}) comparison-based search
procedures and binary trees \cite{mehlhorn}, one can restate  the 
$\langle n,*,D\rangle$--search game problem of \cite{vL+} as the problem of determining 
the \textit{minimum} possible height of any $n$-leaf binary tree, 
 satisfying the  constraints:
\begin{enumerate}
    \item[\bf 1.] any root-to-leaf path contains at most $D$ \textit{right}-pointing edges
    (assuming the  right-pointing edges correspond to the {\sc No} answers),
    \item[\bf 2.]   the leaves of the tree,  read from the leftmost
              to the rightmost leaf, are in one-to-one correspondence
             with the elements of the ordered set $[n]=\{1, \ldots , n\}$.
\end{enumerate}
Additionally, using the equally well-known correspondence between binary trees and
variable-length binary  codes (and assuming to encode each {\sc No} answers with 1, and each each {\sc Yes} answers with 0), one can see the 
$\langle n,*,D\rangle$--search game problem of \cite{vL+} as the problem of determining an \textit{alphabetic} encoding (because of 2. above)  of the elements in $[n]$ such that: \textbf{a)} the encoding has minimum \textit{maximum codeword length},
\textbf{b)} the encoding satisfies the constraint that each
 codeword must contain at most $D$ ones.
In the rest of the paper, we will use interchangeably both  the languages of codes and of trees, since they refer to the same objects, essentially.

A strictly related problem was considered by Dolev \textit{et al.} in \cite{Dolev}.
More precisely, the authors of \cite{Dolev} considered the problem 
of determining 
the {minimum} possible \textit{average} height of any $n$-leaf binary tree
that satisfies the following constraints:
\begin{enumerate}
    \item[\bf 1'.] any
    root-to-leaf path contains at most $D$ \textit{right}--pointing edges,
     \item[\bf 2'.]  the leaves of the tree,  (\textit{not} necessarily read from the leftmost
              to the rightmost leaf), are in one-to-one correspondence
             with the elements of the set $[n]$,
         \item[\bf 3'.] the search space $[n]$ is endowed with a probability
         distribution $p=(p_1, \ldots , p_n)$, where $p_i$ is the probability that 
         the unknown element one wants to determine is equal to $i\in [n].$
    \end{enumerate}
    The authors of \cite{Dolev} were motivated by the goal of
quantifying the time--energy trade-off in message transmissions between mobile hosts
and mobile support stations. 
In the same paper \cite{Dolev} the authors give a $O(n^{2+D})$-time complexity
 algorithm
to construct a \textit{minimum}  \textit{average} height tree
that satisfies the required
properties \textbf{1'}, \textbf{2'}, and \textbf{3'}.
{By the known correspondence between binary trees and search algorithms
that operate by executing \textit{membership queries} (see, e.g. \cite{aigner}, Ch. 1.9)}, one can  see that \textit{any} {minimum}-{average} height binary tree,
satisfying \textbf{1'}, \textbf{2'}, and \textbf{3'}, gives an average-case optimal 
search strategy in which: \textbf{ a)} tests correspond to arbitrary tests of the type:
\lq\lq Does the unknown number belong to a subset $A\subseteq [n]$?''  
and \textbf{b)} the number of tests that can receive a {\sc No}  answer is upper
bounded by $D$. It should be clear, at this point, that the problem can be equivalently formulated
as that of constructing a minimum average length prefix code, in which each
codeword contains at most $D$ ones.

We remark that problems of the kind described above arise in many other different scenarios (e.g., \cite{P+, Sabato+, sabato, Saettler, Saettler2015, Sa, Ser, T+, Tu, YT}) where different test \textit{outcomes} might have different \textit{costs}. 
In our case, a {\sc Yes}
outcome costs 0, while a {\sc No} has some cost $>0$ (like in the classic 
egg-dropping puzzle \cite{K++}, for example), and one seeks average-case optimal
algorithms (with respect to the number of \textit{performed} tests) with a fixed 
bound on the worst-case cost of the algorithm. Related problems are also studied in \cite{GR, KR} and references therein quoted.

\Subsection{Our results}\label{our}
In Section \ref{sec:dynamic} we consider the  scenario of the 
$\langle n,*,D\rangle$--search games
studied in  \cite{vL+}, where now
the search space $[n]$ is endowed with a probability
         distribution $p=(p_1, \ldots , p_n)$, and  $p_i$ is the probability that 
         the unknown element is equal to $i\in [n].$ We design an 
         $O(n^2D)$ time algorithm to find the \textit{best average case}
          comparison-based 
search algorithm $\cal A$ that is capable of determining the unknown 
element in $[n]$, under the constraints that $\cal A$ never performs a sequence of comparisons with more than $D$ overestimates of the unknown number.
Equivalently, our algorithm constructs a 
minimum average height binary tree satisfying 
properties \textbf{1.} and  \textbf{2.} above, for
the input probability
distribution $p=(p_1, \ldots , p_n)$. Our main technical 
tool is  the celebrated  Knuth-Yao  Dynamic Programming--speedup through quadrangle--inequality \cite{knuthp,yao1982speed}.

Successively, in Section \ref{sec:condition} we 
derive the following result.
Let $\{\ell_1,\ldots,\ell_n\}$ be a multiset of positive integers, and $D\geq 1$ be an
integer (by the definition, $\{\ell_1,\ldots,\ell_n\}$ is unordered). We give  a necessary and sufficient 
condition for the {existence} of an $n$-leaf {full} binary tree $T$
(i.e., in which any internal node has two children) 
such that:
\begin{enumerate}
    \item[\bf a)] the multiset of all root-to-leaf path lengths is equal to
    $\{\ell_1,\ldots,\ell_n\}$, 
      \item[\bf b)] any root-to-leaf path of $T$ contains at most $D$ \textit{right}--pointing edges.
\end{enumerate}
The interest in the above result is twofold. First, it represents a
natural extension of the classical 
Kraft inequality (see, eg., Thm. 5.1 of \cite{AW}) that establishes a  necessary and sufficient 
condition for the {existence} of an $n$-leaf {full} binary tree
with multiset of root-to-leaf path lengths  equal to 
    $\{\ell_1,\ldots,\ell_n\}$. 
Secondly, the above result will be instrumental to the derivation of 
an improvement of the result by Dolev \textit{et al.} \cite{Dolev}
that we have discussed before. More precisely, 
in Section \ref{dolev} we design a novel 
algorithm
to construct a \textit{minimum}  \textit{average} height tree
that satisfies 
properties \textbf{1'}, \textbf{2'}, and \textbf{3'};
our algorithm has  $O(n^2D)$ time complexity,
considerably improving on the $O(n^{2+D})$ algorithm presented in \cite{Dolev}.
Note that the parameter $D$ can possibly
assume any value in the interval $[1, \log n]$; therefore, 
our algorithm is the first \textit{polynomial} in $n$ time algorithm to solve 
the problem discussed before.

We conclude  this section by mentioning that our work can also be seen as 
an extension of well-known classical results, in the sense that our first problem 
corresponds to the study of binary alphabetic codes \cite{alp} in 
which all codewords 
have a bounded number $D$ of 1's, and the second problem corresponds to the extension 
of the Huffman encoding \cite{huf} to the same constrained scenario 
just mentioned.
As such, our work  fits into the 
long line of research devoted to the construction of 
minimum average variable-length codes satisfying given structural properties
(see, e.g., \cite{garey1974optimal,Golin3,Golin,wessner1976optimal}).


\Section{Preliminaries}\label{problem}
We find it convenient to 
exploit the known equivalence between 
comparison-based search procedures, and binary alphabetic codes (see \cite{AW,aigner}). 

 Let $S=\{s_1, \ldots, s_n\}=[n]$  be the set of symbols,
ordered according to the standard order.
A binary alphabetic code is a
mapping $w:[n]\longmapsto \{0,1\}^+$, 
enjoying the following two properties:
\begin{enumerate}
    \item the mapping 
 $w:[n]\longmapsto \{0,1\}^+$ is order-preserving,
where the order relation on the set of all
binary strings $\{0,1\}^+$ 
is the  alphabetical order,
\item no codeword $w(s)$ is prefix
of another $w(s')$, for any $s,s'\in [n]$, $s\neq s'$.
\end{enumerate}

In the following sections, we will occasionally see an alphabetic code 
as a binary tree $T$ where each left-pointing edge is labeled by bit 0 and each 
right-pointing edge is labeled  by bit 1.
Each leaf
of the tree corresponds to a different symbol  $s\in S$ and 
the order-preserving property
of alphabetic codes implies that the leaves of $T$, read from the leftmost
leaf to
the rightmost one, appear in the order $1<2<\ldots <n$.

Given a probability distribution $p=(p_1,\dots,p_n)$, where  $p_i$ is
the probability of symbol $s_i$, for $i=1, \ldots , n$, we denote with the symbol  $ \mathbb{E}[C]$ the average  length of an alphabetic code $C=\{w(s): s\in S\}$ for $S$, defined as
  $ \mathbb{E}[C] = \sum_{i=1}^n p_i \ell_i,$
where $\ell_i$ is the length of the codeword $w(s_i)$.

In Section \ref{intr} we interpreted 
the 
$\langle n,*,D\rangle$--search games of \cite{vL+}
in terms of constrained $n$-leaf binary trees.
That said, we can translate our algorithmic problems into the language of alphabetic codes
as follows.
             For any binary string $w \in \{0,1\}^+$, let us denote with $\ones{w}$ the number of 1's that appear in $w$.
      The problem of designing the {best average case}
          comparison-based 
search algorithm $\cal A$ to  determine the unknown 
element in $[n]$, under the constraints that $\cal A$ never performs a sequence of comparison with more than $D$ overestimates of the unknown number, is equivalent
to solving
 the following optimization problem: 

 \vspace*{-.6cm}
\begin{align}\label{eq:problem}
    &\min_{C \mbox{ \footnotesize alphabetic}}\mathbb{E}[C]= 
    \min_{C \mbox{ \footnotesize alphabetic}}\sum_{i=1}^n p_i\,\ell_i,\\
    &\mbox{\ \ \ \ subj. to } \ones{w(s_i)} \leq D, \quad \forall i=1,\dots,n,\nonumber
\end{align}

\vspace*{-.1cm}
\noindent
where $\ell_i$ is the length of $w(s_i)$. In other words, one seeks an alphabetic code of minimum average length in which \textit{each} codeword contains no more than $D$ ones.

\Section{A Dynamic Programming approach}\label{sec:dynamic}
In this section, we first provide a straightforward Dynamic Programming algorithm to compute
a solution to the optimization problem (\ref{eq:problem}).
Subsequently, we employ the  Knuth-Yao  Dynamic Programming--speedup to 
get a better algorithm.

Let $C(i,j, w)$ denote 
the \textit{minimum} average length of any
alphabetic code (tree) for the symbols $s_i,\ldots,s_j$, with $1 \leq i\leq j \leq n$, in which each codeword contains at most $w$ ones.
We say that $C(i,j, w)$ is 
the \textit{cost of the optimal alphabetic tree}.
The solution of
the optimization problem 
 (\ref{eq:problem}) corresponds to the problem of determining the value $C(1,n,D)$. Let 
$
\s(i,j) = p_i+\cdots+p_j
$
denote the sum of the probabilities associated with the symbols $s_i,\ldots,s_j$. 
We  express
the cost $C(i,j,w)$ as a function of the cost of its subtrees.
By considering all the possible ways of splitting the sequence
of symbols $s_i,\ldots,s_j$ into a left and a right subtree (with at least a node in each subtree), we get the following recurrence relation: 
\begin{equation}\label{eq:relazione}
    C(i,j,w) = 
    \begin{cases}
        0 &\!\!\!\!\!\!\!\!\!\!\!\!\!\!\!\!\!\!\!\!\!\!\!\!\!\!\!\!\!\!\!\!\!\!\!\!\!\!\!\!\!\!\!\!\!\!\text{ if } i=j,\\
        \infty &\!\!\!\!\!\!\!\!\!\!\!\!\!\!\!\!\!\!\!\!\!\!\!\!\!\!\!\!\!\!\!\!\!\!\!\!\!\!\!\!\!\!\!\!\!\!\text{ if } i<j \text{ and } w=0,\\
        \s(i,j) + \min_{i < k \leq j}\{C(i,k-1, w) + \\ \qquad C(k, j, w-1)\} &\!\!\!\!\!\!\!\!\!\!\!\!\!\!\!\!\!\!\!\!\!\!\!\!\!\!\!\!\!\!\!\!\!\!\!\!\text{ otherwise.}
    \end{cases}
\end{equation}
We remind that the right subtree is reached from a right-pointing edge of the 
root; therefore, all the corresponding codewords one obtains begin with symbol 1, and this 
accounts for the value $w-1$ in the parameters of the right subtree cost.

From the expression of  (\ref{eq:relazione}),
one can easily see that $C(1,n,D)$, 
can be computed in 
 $O(n^3D)$ time and $O(n^3D)$ space.
We explicitly write the \textbf{Algorithm \ref{alg:naive}} that does that, since it will be useful
for discussing the improvement in the time complexity, presented in the next 
Section \ref{sec:improved}. To compute also 
the optimal alphabetic code (and not only its cost)
the algorithm keeps track  
of the indices on which the \textit{optimal} partition occurs.
This is done with the values $R(i,j,w)$'s, i.e., $R(i,j,w)$ corresponds to the index $k$ for which $
    C(i,j,w) = \s(i,j) + C(i,k-1,w) + C(k,j,w-1).
$
{

\begin{algorithm}
\begin{scriptsize}
\caption{}
\label{alg:naive}
\KwIn{Symbols $S=\{s_1,\dots,s_n\}$, $p=(p_1,\dots,p_n)$ the associated probability distribution and an integer value $D$}

\For{$i \gets 1$ \textbf{to} $n$}{
    \For{$j \gets i+1$ \textbf{to} $n$}{
        $C(i,j,0) = \infty$
    }
}

\For{$w \gets 0$ \textbf{to} $D$} {
    \For{$i \gets 1$ \textbf{to} $n$} {
        $C(i,i,w)=0$
        
        $R(i,i,w)=i$    
    }
}

\For{$w \gets 1$ \textbf{to} $D$} {
    \For{$i \gets 1$ \textbf{to} $n-1$} {
        $C(i,i+1,w)=p_{i} + p_{i+1}$
        
        $R(i,i+1,w)=i$    
    }
}
\For{$w \gets 1$ \textbf{to} $D$} {
    \For{$s \gets 2$ \textbf{to} $n-1$} { 
        \For{$i \gets 1$ \textbf{to} $n-s$} {
    
             $j=i+s$ \tcp*{Proceed by diagonals}
        
             $min=\infty$
             
             $mindex = -1$
             
             \For{$k \gets i+1$ \textbf{to} $j$} {
                \If {$C(i,k, w)+C(k+1,j, w-1)\leq min$} {
                    $min=C(i,k-1, w)+C(k,j, w-1)$
    
                    $mindex=k$
                    }
              }
              $C(i,j, w)=\left (\sum_{k=i}^j p_k\right ) +min$
              
              $R(i,j, w)=mindex$
        }
    }
}
\KwOut{$C(1,n, D)$ and $R(\cdot,\cdot,\cdot)$}
\end{scriptsize}
\end{algorithm}
}
\Subsection{An improved algorithm}\label{sec:improved}
The time complexity of Algorithm \ref{alg:naive} can be improved by a factor of $n$ by
using a technique introduced by Knuth in \cite{knuthp} and generalized by 
Yao \cite{yao1982speed}. 
For such a purpose, let us recall some preliminary definitions and results.

Let $R_{i,j,w}$ be the largest index where the minimum is achieved in the definition of $C(i, j, w)$. We recall that $R_{i,j,w}$ corresponds exactly to the value $R(i,j,w)$ computed in \textbf{Algorithm \ref{alg:naive}}. Moreover, let us use $C_k(i,j,w)$ to denote the cost of an optimal alphabetic code under the constraint that the 
first splitting of symbols $s_i,\ldots,s_j$ 
is performed into $s_i,\dots,s_{k-1}$ and $s_k,\ldots,s_j$, i.e, $C_k(i,j,w)$ satisfies

\vspace*{-.5cm}
\begin{equation}\label{eq:Ck}
    C_k(i,j,w) = \s(i,j) + C(i,k-1,w) + C(k,j,w-1).
\end{equation}

\vspace*{-.2cm}
\noindent

We recall that a two-argument function $f(\cdot, \cdot)$ satisfies the \textit{Quadrangle Inequality} \cite{yao1982speed} if it holds that
   $f(i,j) + f(i',j') \leq f(i',j)+ f(i, j'), \ \forall 
    1\leq i\leq i'\leq j\leq j'\leq n.$
A function $f(\cdot, \cdot)$ is said to be \textit{monotone} if 
it holds that 
$f(i,j') \geq f(i',j), \quad \forall 1\leq i\leq i'\leq j\leq j'\leq n.$
One can easily see that $\s(i,j)=p_i+\cdots+p_j$ is both monotone and satisfies the Quadrangle Inequality (actually, with the equality sign).

We also recall the work of Borchers and Gupta \cite{borchers1994extending} in which they showed that the Quadrangle Inequality holds for a large class of functions. Let $F(i,j,r)$ be the minimum cost of solving a problem on inputs $i,i+1,\dots,j$ with \lq\lq resource" $r$ (in our case,
the resource will be  the  maximum  number of 1's one can have in each codeword).
Borchers and Gupta showed that the Quadrangle Inequality holds for all functions $F(i,j,r)$ defined as follows:

\begin{equation}\label{eq:relazione_F}
    F(i,j,r) = 
    \begin{cases}
        w(i), &\!\!\!\!\!\!\!\!\!\!\!\!\!\!\!\!\!\!\!\!\!\!\!\!\!\!\!\!\!\!\!\!\!\!\!\!\!\!\!\!\!\!\!\!\!\!\!\!\!\!\!\!\!\!\!\!\!\!\!\!\!\!\!\!\!\!\!\text{ if } i=j \text{ and } r\geq 0,\\
        \min_{i \leq k < j} \{aF(i,k,f(r)) + bF(k+1,j,g(r)) \\
        \qquad + h(i,k,j)\}, & \!\!\!\!\!\!\!\!\!\!\!\!\!\!\!\!\!\!\!\!\!\!\!\!\!\!\!\!\!\!\!\!\!\!\!\!\!\!\!\!\!\!\!\!\!\!\!\!\!\!\!\!\!\!\!\!\!\!\!\!\!\!\!\!\!\!\!\text{ if } f(r)\geq 0 \text{ and } g(r)\geq 0,\\
        \mbox{undefined}, & \!\!\!\!\!\!\!\!\!\!\!\!\!\!\!\!\!\!\!\!\!\!\!\!\!\!\!\!\!\!\!\!\!\!\!\!\!\!\!\!\!\!\!\!\!\!\!\!\!\!\!\!\!\!\!\!\!\!\!\!\!\!\!\!\!\!\! \text{ otherwise,}
    \end{cases}
\end{equation}
where $w(i)$ denotes the cost of $F(i,i,r)$, $f(\cdot)$ and $g(\cdot)$ are two functions such that $r\geq f(r)$ and $r\geq g(r)$, i.e., they are assuming that the resource does not increase, $a$ and $b$ are two positive integers, and the function $h(i,k,j)$ is the cost of dividing the problem $F(i,j,r)$ at the splitting point $k$. Moreover, the function $h$ must satisfy the following conditions that are similar to the quadrangle inequality: for $i\leq j \leq t < s \leq l$ and $i\leq s < l$, if $t\leq k$ it holds that 
\begin{equation}\label{cond1}
    h(i,t,s)-h(j,t,s)+h(j,k,l)-h(i,k,l)\leq 0 
\end{equation}
and 
\begin{equation}\label{cond2}
    h(j,k,l)-h(i,k,l) \leq 0.
\end{equation}
While if $k < t$, it holds that 
\begin{equation}\label{cond3}
    h(j,t,l)-h(j,t,s)+h(i,k,s)-h(i,k,l)\leq 0 
\end{equation}
and 
\begin{equation}\label{cond4}
    h(i,k,s)-h(i,k,l) \leq 0.
\end{equation}
The authors of \cite{borchers1994extending} proved the following result.
\begin{lemma}[\cite{borchers1994extending}]\label{lemma_F}
   For each $r\geq 0$,
   the function $F$ defined in (\ref{eq:relazione_F}) satisfies the Quadrangle Inequality, 
   that is, $\forall 1\leq i \leq i'\leq j \leq j' \leq n$ it holds that
     \begin{equation}\label{eq:qi_inequality_F}
         F(i,j,r)+F(i',j',r) \leq F(i,j',r) + F(i',j,r).
     \end{equation}
\end{lemma}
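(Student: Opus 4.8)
The plan is to prove the Quadrangle Inequality (\ref{eq:qi_inequality_F}) by induction on the length $\ell=j'-i$ of the outer interval, keeping the resource $r$ \emph{universally quantified}: the induction hypothesis asserts that (\ref{eq:qi_inequality_F}) holds, for every admissible resource value, on all index-intervals strictly shorter than $\ell$. This quantification is exactly what makes the recursion in (\ref{eq:relazione_F}) usable, since splitting $[i,j']$ produces left subproblems at resource $f(r)\le r$ and right subproblems at resource $g(r)\le r$; because $f$ and $g$ depend only on $r$ and not on the interval, the four terms of (\ref{eq:qi_inequality_F}) expand into subproblems carrying the \emph{same} reduced resource, so the hypothesis applies uniformly across all four. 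Throughout I assume $i\le i'\le j\le j'$ and that all four quantities in (\ref{eq:qi_inequality_F}) are defined. As a first step I dispose of the degenerate cases: if $i=i'$ or $j=j'$, two of the four terms coincide and (\ref{eq:qi_inequality_F}) holds with equality. This reduces matters to $i<i'\le j<j'$, which I split into the regimes $i'<j$ and $i'=j$.

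For the main regime $i'<j$, let $y$ and $z$ be optimal split points (minimizers $k$) in the recurrence (\ref{eq:relazione_F}) for $F(i',j,r)$ and $F(i,j',r)$, respectively, and compare them. If $z\le y$, I use $z$ as a feasible (possibly suboptimal) split for $F(i,j,r)$ and $y$ as one for $F(i',j',r)$; both are feasible because $i\le z\le y<j$ and $i'\le y<j'$. Expanding $F(i,j',r)$ and $F(i',j,r)$ at their optimal points, summing the two upper bounds against the two exact expansions, the $a$-weighted left-subproblem terms $aF(i,z,f(r))$ and $aF(i',y,f(r))$ cancel identically. The claim then reduces to two facts: a Quadrangle Inequality for the right subproblems, $F(z+1,j,g(r))+F(y+1,j',g(r))\le F(z+1,j',g(r))+F(y+1,j,g(r))$, which is the induction hypothesis at resource $g(r)$ on the strictly shorter interval $[z+1,j']$ (its span is at most $j'-i-1<\ell$); and a quadrangle-type inequality among the splitting costs, $h(i,z,j)+h(i',y,j')\le h(i,z,j')+h(i',y,j)$, which is precisely what condition (\ref{cond3}) guarantees under the ordering $i\le i'\le y<j<j'$ (and (\ref{cond1}) when $z=y$).

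The symmetric choice $y<z$ is handled the same way with the roles of the $a$- and $b$-terms exchanged: now $y$ is used as a split for $F(i,j,r)$ and $z$ for $F(i',j',r)$, the right-subproblem terms cancel, the $a$-terms collapse to the induction hypothesis at resource $f(r)$ on the shorter interval $[i,z]$, and the residual $h$-terms are controlled by condition (\ref{cond1}) under the ordering $i\le i'\le y<j<j'$. In both sub-cases the only nonroutine points are checking feasibility of the chosen split and matching the algebraic residue to exactly one of conditions (\ref{cond1})--(\ref{cond4}).

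The remaining regime $i'=j$ is the delicate one, and I expect it to be the main obstacle: the inner interval $[i',j]$ degenerates to a single symbol and admits no split, so the point $y$ above no longer exists and the argument must be restructured. Here I would branch on the optimal split $z$ of $[i,j']$ relative to $j$: when $z<j$ the split $z$ is feasible for $F(i,j,r)$, while when $z\ge j$ it is feasible for $F(j,j',r)=F(i',j',r)$. In each branch, after expanding $F(i,j',r)$ at $z$ and cancelling the shared subproblem, the target $F(i,j,r)+F(j,j',r)\le F(i,j',r)+F(j,j,r)$ reduces to an induction-hypothesis application on a strictly shorter interval together with one of the monotonicity-type conditions (\ref{cond2}) or (\ref{cond4}) on $h$. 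Verifying feasibility of the chosen split in each branch and invoking exactly the right condition among (\ref{cond1})--(\ref{cond4}) is the crux; once the resource is threaded consistently through $f$ and $g$, the rest is bookkeeping.
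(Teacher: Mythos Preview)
The paper does not prove Lemma~\ref{lemma_F}: it is quoted as a result of Borchers and Gupta~\cite{borchers1994extending} and no proof is given, so there is nothing in the paper to compare your proposal against. For what it is worth, the plan you describe---induction on the span $j'-i$ with the resource universally quantified, splitting $F(i,j',r)$ and $F(i',j,r)$ at their optimal points and comparing those two split indices---is the standard Yao-style argument and is, in outline, how the cited reference proceeds as well.
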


Using  Lemma \ref{lemma_F}
 we can prove the following result.

\begin{corollary}\label{cor:QI}
    For each $0\leq w\leq D$,
   our function $C$ defined in (\ref{eq:relazione}) satisfies the Quadrangle Inequality, 
   that is, 
   $\forall 1\leq i \leq i'\leq j \leq j' \leq n$, it holds that
     \begin{equation}\label{eq:qi_inequality_C}
         C(i,j,w)+C(i',j',w) \leq C(i,j',w) + C(i',j,w).
     \end{equation}
\end{corollary}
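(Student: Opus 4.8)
The plan is to realize our recurrence (\ref{eq:relazione}) as a particular instance of the Borchers--Gupta scheme (\ref{eq:relazione_F}) and then quote Lemma \ref{lemma_F} directly. The first step is to set up the dictionary between the two recurrences. Reindexing the splitting point by $k \mapsto k-1$ turns our range $i < k \leq j$ into $i \leq k < j$, with left block $s_i,\dots,s_k$ and right block $s_{k+1},\dots,s_j$; under this reindexing (\ref{eq:relazione}) becomes exactly (\ref{eq:relazione_F}) with the resource $r$ taken to be the budget $w$ on the number of $1$'s, $a=b=1$, $f(r)=r$, $g(r)=r-1$, $h(i,k,j)=\s(i,j)$ and leaf cost $w(i)=0$ (matching $C(i,i,w)=0$). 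Here $f(r)=r$ reflects that the left subtree keeps the full budget $w$, while $g(r)=r-1$ reflects that the right subtree, reached through a $1$-labeled edge, loses one unit; since both $f(r)$ and $g(r)$ are at most $r$, and $a,b$ are positive integers, the structural hypotheses of the scheme are met.

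The second step is to check that $h(i,k,j)=\s(i,j)$ meets conditions (\ref{cond1})--(\ref{cond4}). The simplification is that $\s(i,j)=p_i+\cdots+p_j$ is independent of the middle (splitting) argument and additive in its endpoints, so every expression appearing in (\ref{cond1})--(\ref{cond4}) is a signed sum of partial sums of the $p_m$'s that telescopes. For (\ref{cond1}) and (\ref{cond3}) the splitting index cancels and the surviving terms add to zero, so both hold with equality; (\ref{cond2}) and (\ref{cond4}) reduce to $\s(j,l)\leq \s(i,l)$ for $i\leq j$ and $\s(i,s)\leq \s(i,l)$ for $s\leq l$, which hold because $\s$ is a sum of nonnegative probabilities over a nested index range. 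With these verifications in place, Lemma \ref{lemma_F} yields (\ref{eq:qi_inequality_C}) on the range where (\ref{eq:relazione_F}) is defined.

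The remaining, and most delicate, step is to reconcile the base cases. The scheme (\ref{eq:relazione_F}) leaves $F$ \emph{undefined} once the resource is exhausted, whereas our recurrence carries the explicit value $C(i,j,0)=\infty$ for $i<j$. I would first observe that these infinite entries occur only at $w=0$: for every $w\geq 1$ a left-spine (caterpillar) tree, in which each spine node carries one leaf as its right child, realizes any number of leaves using at most one $1$ per codeword, so $C(i,j,w)$ is finite whenever $w\geq 1$. Thus for $w\geq 1$ Lemma \ref{lemma_F} applies verbatim and gives (\ref{eq:qi_inequality_C}). For $w=0$ I would verify (\ref{eq:qi_inequality_C}) directly: taking $i\leq i'\leq j\leq j'$, if $i<j$ then $C(i,j,0)=\infty$ appears on the left and $C(i,j',0)=\infty$ on the right (since $i<j'$), so both sides equal $\infty$; and if $i=j$ then the ordering constraints force $i'=i=j$ and (\ref{eq:qi_inequality_C}) collapses to an identity.

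This final case split is exactly where I expect the real work to lie, since it is the only point at which our problem is not literally an instance of (\ref{eq:relazione_F}): one must confirm that the $\infty$ entries never make the left-hand side of the quadrangle inequality strictly exceed a finite right-hand side. The observation that infeasibility is confined to $w=0$, together with the collapse $i'=i=j$ forced by the ordering in that regime, removes this difficulty cleanly and lets Lemma \ref{lemma_F} carry all the finite cases.
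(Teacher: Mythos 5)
Your proof is correct and follows essentially the same route as the paper: instantiate the Borchers--Gupta recurrence (\ref{eq:relazione_F}) with $w(i)=0$, $a=b=1$, $f(x)=x$, $g(x)=x-1$, $h(i,k,j)=\s(i,j)$, verify conditions (\ref{cond1})--(\ref{cond4}) by telescoping, and invoke Lemma \ref{lemma_F}. Your additional care with the $w=0$ base case, where $C(i,j,0)=\infty$ while $F$ is left undefined, is a point the paper's proof passes over silently, and your direct verification of (\ref{eq:qi_inequality_C}) in that regime is a welcome (if minor) tightening.
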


\begin{proof}
    We need to show that the function $C$ is a special case of (\ref{eq:relazione_F}). For such a purpose, let $w(i)=0$ for each $i=1,\dots,n$, and let $a=b=1$. While as functions $f$ and $g$ we take $f(x)=x$ and $g(x)=x-1$. Finally, we define the function $h(i,k,j)=\s(i,j)$ for each $i\leq k < j$, where $\s(i,j)=p_i+\dots+p_j$. One can see that the function $h$ defined in this way satisfies the conditions (\ref{cond1}-\ref{cond4}). In fact, for $i\leq j \leq t < s \leq l$ and $i\leq s$, if $t\leq k$ it holds that
    \begin{align}
        h(i,t,s)&-h(j,t,s)+h(j,k,l)-h(i,k,l) \\
                        &= \s(i,s)-\s(j,s)+\s(j,l)-\s(i,l)=0
    \end{align}
    and 
    \begin{align}
        h(j,k,l)-h(i,k,l) = \s(j,l)-\s(i,l) < 0.
    \end{align}
    Similarly, if $k<t$, it holds that
    \begin{align}
        h(j,t,l)&-h(j,t,s)+h(i,k,s)-h(i,k,l) \\
                &= \s(j,l)-\s(j,s)+\s(i,s)-\s(i,l)
        = 0
    \end{align}
    and 
    \begin{align}
        h(i,k,s)-h(i,k,l) = \s(i,s)-\s(i,l) < 0.
    \end{align}
    Since we have shown that the function $C$ is a special case of (\ref{eq:relazione_F}), from Lemma \ref{lemma_F} we get that the function $C$ satisfies the Quadrangle Inequality. 
\end{proof}

By using Corollary \ref{cor:QI} we can show the following result which is a generalization of the 
crucial property called ``monotonicity of the roots" in  \cite{knuthp}. 
This result will be the key step for the improved algorithm.
\begin{theorem}\label{th}
    For each $0\leq w\leq D$ it holds that
    \begin{equation}
       \forall 1\leq i\leq j\leq n \quad  R_{i,j,w}\leq R_{i,j+1,w}\leq R_{i+1,j+1,w}.
    \end{equation}
\end{theorem}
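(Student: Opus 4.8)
The plan is to mimic the classical Knuth–Yao argument for the monotonicity of optimal splitting points, adapting it to the extra resource parameter $w$. The goal is to establish the two inequalities $R_{i,j,w}\le R_{i,j+1,w}$ and $R_{i,j+1,w}\le R_{i+1,j+1,w}$ separately, each by contradiction, leaning on the Quadrangle Inequality for $C$ proved in Corollary~\ref{cor:QI}. The engine of both arguments is a standard exchange lemma: if $C$ satisfies the Quadrangle Inequality, then the function $C_k(i,j,w)$ defined in~(\ref{eq:Ck}) satisfies a ``crossing'' inequality of the form
\begin{equation*}
    C_k(i,j+1,w) + C_{k'}(i,j,w) \le C_{k'}(i,j+1,w) + C_k(i,j,w)
\end{equation*}
whenever $k\le k'$. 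First I would prove this crossing inequality, since it is the local tool that forces the optimal root to move monotonically.

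For the first inequality, I would set $k = R_{i,j,w}$ and suppose, for contradiction, that the optimal root $R_{i,j+1,w} = k' < k$ for the larger interval. Because $k$ is the \emph{largest} index achieving the minimum for $C(i,j,w)$, we have the strict comparison $C_{k'}(i,j,w) > C_k(i,j,w)$, i.e. $C_k(i,j,w) - C_{k'}(i,j,w) < 0$. Plugging this into the crossing inequality yields $C_k(i,j+1,w) - C_{k'}(i,j+1,w) < 0$, which says $k$ gives a strictly smaller cost than $k'$ on the interval $[i,j+1]$, contradicting the optimality (and maximality) of $k'=R_{i,j+1,w}$. The only subtlety is bookkeeping the difference of the common additive term $\s(\cdot,\cdot)$ and the shared subtree costs so that everything reduces cleanly to an application of~(\ref{eq:qi_inequality_C}); this is routine once the splitting is written out via~(\ref{eq:Ck}).

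The second inequality, $R_{i,j+1,w}\le R_{i+1,j+1,w}$, is symmetric: it compares the optimal root over $[i,j+1]$ with that over $[i+1,j+1]$, where now the \emph{left} endpoint shifts. The same crossing argument applies, with the roles of the two arguments of the Quadrangle Inequality transposed, so one again invokes~(\ref{eq:qi_inequality_C}) with the appropriate quadruple of indices and derives a contradiction from maximality of the roots. I would present this case briefly, noting the symmetry, rather than repeating the full computation.

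The main obstacle I anticipate is \emph{handling the resource parameter} $w$ correctly throughout. In the recurrence~(\ref{eq:relazione}) the left subtree inherits $w$ while the right subtree uses $w-1$, so the crossing inequality involves $C(\cdot,\cdot,w)$ on the left pieces and $C(\cdot,\cdot,w-1)$ on the right pieces. The key point to verify is that the Quadrangle Inequality furnished by Corollary~\ref{cor:QI} holds \emph{for each fixed} $w$, so that when I compare two splittings at the same resource level $w$ the relevant left-subtree costs $C(i,k-1,w)$ and $C(i,k'-1,w)$ live at the same $w$, and likewise the right-subtree costs $C(k,j,w-1)$ and $C(k',j,w-1)$ live at the same $w-1$ — hence both groups are governed by a single instance of~(\ref{eq:qi_inequality_C}) at the respective level. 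Once it is clear that the comparison never mixes different resource levels within a single quadrangle application, the argument collapses to the classical one, and the theorem follows.
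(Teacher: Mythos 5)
Your strategy is the same as the paper's: both reduce the monotonicity of the roots to a four-point (``crossing'') inequality on the split costs $C_k(i,j,w)$ of (\ref{eq:Ck}), which, after cancelling the common $\s(\cdot,\cdot)$ terms and the shared left-subtree costs $C(i,k-1,w)$ and $C(i,k'-1,w)$, collapses to a single instance of the Quadrangle Inequality for $C(\cdot,\cdot,w-1)$; your remark that the two resource levels never get mixed inside one quadrangle application is exactly the point of the paper's computation following (\ref{rewritten}). However, there are two concrete slips. First, your crossing inequality is stated with the wrong index ordering. For $k\le k'$, expanding $C_k(i,j+1,w)+C_{k'}(i,j,w)\le C_{k'}(i,j+1,w)+C_k(i,j,w)$ and cancelling reduces it to
\begin{equation*}
C(k,j+1,w-1)+C(k',j,w-1)\ \le\ C(k',j+1,w-1)+C(k,j,w-1),
\end{equation*}
which is the \emph{reverse} of what Corollary \ref{cor:QI} gives for $k\le k'\le j\le j+1$; the inequality as you wrote it is valid precisely when $k'\le k$. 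Since your application takes $k'=R_{i,j+1,w}<k=R_{i,j,w}$, you do use it in the regime where it is true, so the argument survives, but the lemma as stated would not.

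Second, the step ``because $k$ is the largest index achieving the minimum for $C(i,j,w)$, we have the strict comparison $C_{k'}(i,j,w)>C_k(i,j,w)$'' is unjustified: a smaller index $k'$ may also attain the minimum (maximality of $R_{i,j,w}$ only rules out larger ties), so you only get $C_{k'}(i,j,w)\ge C_k(i,j,w)$. This weaker statement is enough, though: the corrected crossing inequality then yields $C_k(i,j+1,w)\le C_{k'}(i,j+1,w)$ with $k>k'$, so $k$ is a minimizer of $C(i,j+1,w)$ strictly larger than $k'$, contradicting the \emph{maximality} of $k'=R_{i,j+1,w}$ --- no strictness is needed. This is exactly why the paper phrases the key step as the weak implication (\ref{eq:inequality_1}) rather than as a strict contradiction. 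With these two repairs (and the trivial cases $w=0$ and $i=j$ dispatched separately, as the paper does), your proof coincides with the paper's.
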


\begin{proof}
    Let $1 \leq w \leq D$ (for $w=0$ the theorem is trivially true). The claim is trivially true also when $i=j$. Let us assume that $i < j$. We will show that $R_{i,j,w} \leq R_{i,j+1,w}$, the argument for $R_{i,j+1,w}\leq R_{i+1,j+1,w}$ follows by symmetry.
    Since $R_{i,j,w}$ is the largest index where the minimum is achieved in the definition of $C(i,j,w)$, it is sufficient to show that $\forall i<k\leq k'\leq j$
    it holds that 
    \begin{align} \label{eq:inequality_1}
        C_{k'}(i,j,w) \leq C_{k}(i,j,w) {\implies} C_{k'}(i,j+1,w) \leq C_{k}(i,j+1,w).
    \end{align}
    We will prove {that (\ref{eq:inequality_1}) holds by showing the following  inequality, for all $i<k\leq k'\leq j,$}   
    \begin{equation}\label{eq:inequality}
        C_k(i,j,w) - C_{k'}(i,j,w) \leq C_k(i,j+1,w) - C_{k'}(i,j+1,w).
    \end{equation}
    {%
    In fact, let us assume that (\ref{eq:inequality}) is true. Then, whenever  $ C_{k'}(i,j,w) \leq C_{k}(i,j,w)$ of (\ref{eq:inequality_1}) holds, 
    from (\ref{eq:inequality}) we get that 
    \begin{equation}\label{eq:nn}
        0\leq C_k(i,j,w) - C_{k'}(i,j,w) \leq C_k(i,j+1,w) - C_{k'}(i,j+1,w).
    \end{equation}
    From (\ref{eq:nn}),  since $0\leq C_k(i,j+1,w) - C_{k'}(i,j+1,w)$, we obtain that $ C_{k'}(i,j+1,w) \leq C_{k}(i,j+1,w)$, that is,  (\ref{eq:inequality_1}) holds too. To prove  (\ref{eq:inequality}) let us rewrite it as follows}
    \begin{equation}\label{rewritten}
         C_k(i,j,w) +C_{k'}(i,j+1,w) \leq C_k(i,j+1,w) +C_{k'}(i,j,w).
    \end{equation}
    By expanding all four terms of (\ref{rewritten}) we get (cfr., see (\ref{eq:Ck}))
    \begin{align*}
        \s&(i,j) + C(i,k-1,w) + C(k,j,w-1) +\s(i,j+1) + C(i,k'-1,w) + C(k',j+1,w-1)\\ 
        &\leq \s(i,j+1) + C(i,k-1, w) + C(k,j+1,w-1)
        + \s(i,j) + C(i,k'-1,w) + C(k',j,w-1).
    \end{align*}
    Grouping the terms of the  above inequality, we obtain
    \begin{equation}
         C(k,j,w-1) + C(k',j+1,w-1) \leq C(k,j+1,w-1) + C(k',j,w-1).
    \end{equation}
This is the quadrangle inequality for 
    the function $C$ with parameter $w-1$,
    with $k\leq k'\leq j < j+1$, and from Corollary \ref{cor:QI} we know that it holds true. Thus, this concludes the proof.
\end{proof}

Due to Theorem \ref{th}, instead of searching from $i+1$ to $j$ 
(line 18 of \textbf{Algorithm 1}), it is
sufficient to search from $R(i,j-1,w)$ through $R(i+1,j,w)$. 
{Therefore  line 18 of \textbf{Algorithm 1} can be
substituted by 
\lq\lq \textbf{for} $k \leftarrow R(i,j-1,w)$ \textbf{to} $R(i+1,j,w)$".}
Exactly as argued by Knuth (see the remark after the Corollary, 
p. 19 of \cite{knuthp}), 
when $C(i,j,w)$ is being calculated with this modification, only $R(i+1,j,w)-R(i,j-1,w)+1$ cases need to be examined.  Summing for fixed $j-i$ gives a telescoping series showing that the total amount of work of each iteration of the loop at line 12 of the modified \textbf{Algorithm \ref{alg:naive}} is proportional to $O(n^2)$, at worst. 
This decreases the total time complexity of the algorithm from $O(n^3D)$ to $O(n^2D)$. 

\Section{A Kraft-like condition for the existence of constrained binary trees}\label{sec:condition}
Let $\{\ell_1,\ldots,\ell_n\}$ be a multiset of positive integers, and  $D$ a positive integer. 
In this section, we derive a necessary and sufficient 
condition for the {existence} of an $n$-leaf {full} binary tree $T$
such that:
\textbf{a)} the (unordered)
multiset of all root-to-leaf path lengths is equal to
    $\{\ell_1,\ldots,\ell_n\}$,
    and  \textbf{b)}
      any root-to-leaf path of $T$ contains at most $D$ \textit{right}--pointing edges.
In addition to its intrinsic interest (indeed,  the result represents a
natural extension of the classical
Kraft inequality (Thm. 5.1 of \cite{AW})),
it will  be essential for the derivation of 
an improvement of the result by Dolev \textit{et al.} \cite{Dolev}
that we have anticipated in Section \ref{our}.
Using the natural correspondence between binary trees and binary prefix codes, we will first derive a sufficient condition for the existence of binary prefix codes 
with the \textit{constraint} that the number of 1's in each codeword is bounded
by a given parameter $D$. We point out that there are many extensions of the Kraft inequality
to situations where the codes have to satisfy additional constraints (besides that of being 
prefix),
see \cite{Bassino, carter, DP, GolinDCC, K+, KH, RS,Vi}. However, none of the 
 papers quoted above deals with the extension we derive in this section.

\begin{lemma}\label{lemma:sufficient}
Let $\{\ell_1,\ldots,\ell_n\}$ be a multiset of positive integers, $D$ a positive integer and $L=\max_{i} \ell_i$. Let $N_j$ count the number of integers 
in {$\{\ell_1,\ldots,\ell_n\}$} that are equal to $j$,
for $j=1,\dots, L$.  Then, there exists a binary prefix code
with  codeword lengths $\ell_1,\ldots,\ell_n$, such that each
codeword does not contain more than $D$ $1$'s, \emph{if} the following inequalities hold:

    \begin{align}\label{eq:condition}
        N_j \leq \sum_{i=0}^{D-1} \binom{j}{i} + \binom{j-1}{D-1} - M_{j+1},\quad \forall j =1,\ldots,L, \ \mbox{ where }
    \end{align}

    \vspace*{-.4cm}
    \begin{equation}\label{eq:M}
        M_{j+1} = 
        \begin{cases}
        0 \quad &\mbox{ if } j\geq L,\\
        \left\lceil\frac{\displaystyle N_{j+1} + M_{j+2}}{\displaystyle 2}\right\rceil &\mbox{ if } 1\leq j <L.
        \end{cases}    
    \end{equation}
\end{lemma}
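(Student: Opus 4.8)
The plan is to read (\ref{eq:condition}) as a capacity constraint, one inequality per level, and then to build the prefix tree greedily level by level so that the required leaves and the internal nodes feeding the lower levels always fit into the positions the constraint allows.

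First I would give a combinatorial meaning to the right-hand side of (\ref{eq:condition}). Call a length-$j$ binary string \emph{admissible} if it can occur as a node of some tree in which every root-to-leaf path carries at most $D$ ones. Two observations pin these down: (i) a node with exactly $D$ ones must be a leaf, since its $1$-child would contain $D+1$ ones; hence (ii) no string having a proper prefix with exactly $D$ ones can ever occur. Consequently the admissible strings of length $j$ are exactly those with at most $D-1$ ones, together with those having exactly $D$ ones whose last bit is $1$; their number is $\sum_{i=0}^{D-1}\binom{j}{i}+\binom{j-1}{D-1}$. Writing $A_j$ for this quantity, Pascal's rule gives the equivalent form $A_j=\sum_{i=0}^{D}\binom{j}{i}-\binom{j-1}{D}$, i.e.\ all valid strings minus the forbidden ones ending in $0$ after already reaching $D$ ones. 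I will also record the companion fact, used below, that every admissible string of length $j+1$ is the child of an admissible \emph{light} node at level $j$ (one with at most $D-1$ ones), so $A_{j+1}\le 2\sum_{i=0}^{D-1}\binom{j}{i}$.

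Next I would interpret $M_{j+1}$ from (\ref{eq:M}) as the number of internal (necessarily light) nodes I keep at level $j$. Since the tree need only be a prefix tree, internal nodes may have one or two children, so to parent the $N_{j+1}+M_{j+2}$ nodes living at level $j+1$ I need at least $\lceil (N_{j+1}+M_{j+2})/2\rceil=M_{j+1}$ of them, which is exactly the recurrence; the base case $M_{j+1}=0$ for $j\ge L$ records that nothing lives below the deepest level. The construction then proceeds top-down while using these bottom-up reserved counts: maintaining a set of $M_j$ internal nodes at level $j-1$, their admissible children furnish the candidates at level $j$, and I select $N_j$ of them as leaves and $M_{j+1}$ as the next internal nodes. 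That $2M_j\ge N_j+M_{j+1}$ follows directly from the ceiling in (\ref{eq:M}), so there are always enough child slots; that the $M_{j+1}$ internal nodes can be chosen light follows because hypothesis (\ref{eq:condition}) at level $j+1$ gives $N_{j+1}+M_{j+2}\le A_{j+1}\le 2\sum_{i=0}^{D-1}\binom{j}{i}$, whence $M_{j+1}\le\sum_{i=0}^{D-1}\binom{j}{i}$, the number of light positions. Finally, placing leaves on the forced-leaf (exactly-$D$-ones) positions first, hypothesis (\ref{eq:condition}) at level $j$, namely $N_j+M_{j+1}\le A_j$, guarantees that the $N_j$ leaves and $M_{j+1}$ internal nodes together fit among the $A_j$ admissible positions. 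Checking the extremes---all nodes at level $L$ are leaves because $M_{L+1}=0$, and at level $1$ we have $A_1=2$, matching the root's two children---completes the induction and yields a prefix code with the prescribed lengths and at most $D$ ones per codeword.

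The step I expect to be the main obstacle is the interlevel \emph{consistency} hidden in the greedy choice: the admissible positions selected as leaves and as internal nodes at level $j+1$ must actually be children of the internal nodes retained at level $j$, not merely equal in number. I would handle this by making the selection canonical---always retaining the lexicographically smallest admissible light children---so that the retained internal nodes nest across levels and the counting inequalities above (the ceiling bound from (\ref{eq:M}) and the two instances of (\ref{eq:condition}) at levels $j$ and $j+1$) suffice to certify feasibility at every step. Verifying that this canonical rule never strands a required leaf is the delicate bookkeeping, and it is precisely where the slack created by the ceilings in the definition of $M_{j+1}$ is consumed.
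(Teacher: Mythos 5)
Your skeleton is essentially the paper's own proof: the same capacity count $A_j=\sum_{i=0}^{D-1}\binom{j}{i}+\binom{j-1}{D-1}$ for "admissible" positions, the same reading of $M_{j+1}$ as the number of nodes at level $j$ reserved as ancestors/internal nodes, and the same level-by-level greedy construction (the paper runs it bottom-up from level $L$, always taking leftmost nodes of the maximal full tree; you run it top-down with reserved counts). However, the step you defer at the end as ``delicate bookkeeping'' is a genuine gap, and it cannot be closed inside your framework. Your inequality $M_{j+1}\le\sum_{i=0}^{D-1}\binom{j}{i}$ counts light positions over the \emph{whole} of level $j$, whereas the internal nodes you retain must be light \emph{children of the $M_j$ parents kept at level $j-1$}; a retained parent with exactly $D-1$ ones contributes only one light child, so the number of usable light children can be strictly smaller than $M_{j+1}$ no matter how the parents are chosen, canonically or otherwise.

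Concretely, take $D=2$ and lengths consisting of seven $4$'s and two $2$'s, so $N_4=7$, $N_3=0$, $N_2=2$, $N_1=0$, giving $M_4=4$, $M_3=2$, $M_2=2$; the hypothesis (\ref{eq:condition}) holds ($7\le 8$, $0\le 2$, $2\le 2$, $0\le 0$). Your construction must keep $M_3=2$ light internal nodes at level $2$, i.e.\ two of $\{00,01,10\}$, whose four children must contain $M_4=4$ light strings; but every such pair has at most three light children (one retained parent has exactly one $1$, so one of its children is heavy), so the step fails for every choice. Worse, no prefix code with these lengths exists whose codewords are all admissible in your sense: admissible length-$4$ strings with prefixes $00,01,10,11$ number $4,2,2,0$ respectively, so after removing two level-$2$ codewords at most $6<7$ admissible length-$4$ strings survive. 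A code satisfying the lemma's conclusion does exist, e.g.\ $\{01,\,11,\,0000,\,0001,\,0010,\,0011,\,1000,\,1001,\,1010\}$, but it is forced to use $1010$, whose proper prefix $101$ already contains $D$ ones --- exactly the strings your observations (i)--(ii) exclude. (Those observations are valid only for \emph{full} trees; in a general prefix code a node with $D$ ones may be internal with a $0$-child.) So the missing step is not a technicality: any correct proof must leave the class of trees you (and, for that matter, the paper's own argument, which asserts the blocked-ancestor count is \emph{exactly} $M_{j+1}$ at every level and runs into the same example) restrict to, and allow heavy internal nodes.
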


\begin{proof}
As per established convention, we assume that $\binom{j}{i} = 0$ if $i > j$.
    Let $T$ be the binary tree of maximum depth $L$ such that:
    1) $T$ is full, 2) $T$ has the maximum number of nodes,
    and 3)
    any root-to-leaf path of  $T$  has at most $D$ right-pointing edges
    
   We label each left-pointing edge of $T$ with $0$ and 
    each right-pointing edge with $1$.
    The set of all binary words  one gets by reading the 
    sequences of binary labels from the root of $T$ to each leaf
    gives a binary prefix code $P$ in which $\ones{w} \leq D$, for each $w\in P.$
{We now show that the number of nodes at the level $j$ of such a tree $T$ is exactly equal to 
\begin{equation}\label{eq:num_node_j}
        \sum_{i=0}^{D-1} \binom{j}{i} + \binom{j-1}{D-1}\quad \forall j=1, \ldots , L.
\end{equation}

First of all, we recall that each path in the tree can be seen as a binary string. Therefore, (\ref{eq:num_node_j}) holds because the first term   $\sum_{i=0}^{D-1} \binom{j}{i}$ 
counts the number of binary strings of length $j$,
each containing at most $D-1$ ones. The number of binary strings of length $j$ containing exactly $D$ ones 
that one obtains,
by reading the root-to-leaves path in $T$
is equal to  $\binom{j-1}{D-1}$. This is true because any binary string of length $j$ that one obtains
by reading the root-to-leaves path in $T$ \textit{and} that contains $D$ ones must end with 1.
In fact, let us consider a binary string of length $j$ that contains exactly $D$ ones and does not end with 1. In the path from the root to the node associated with such a binary string, there will be at least one internal node that has only exactly \textit{one} child. Therefore, such a node cannot belong at level $j$ of the binary tree $T$ since, by hypothesis, $T$ is full.
}

We show that for an arbitrary
   list of positive integers $\ell_1,\ldots,\ell_n,D$ that satisfy the inequalities (\ref{eq:condition}), there is a prefix code $C$
   with $\ell_1,\ldots,\ell_n$ as codeword lengths, obeying the
   constraint that
   $\ones{w} \leq D$, for each $w\in C.$ 
   The idea of the proof is to build the code by taking as codewords the binary sequences 
   one gets by following the paths from the root  to 
    nodes of $T$ of levels $\ell_1,\ldots,\ell_n$, 
  starting from level  $L=\max_{i} \ell_i$ (i.e., by constructing the 
  longest codewords first).  From (\ref{eq:condition}) we know that
    \begin{equation}\label{eq:construct}
        N_L \leq \sum_{i=0}^{D-1} \binom{L}{i} + \binom{L-1}{D-1}.
    \end{equation}
    Hence, there are enough available nodes in the level $L$ of $T$ to construct $N_L$ codewords of length $L$. In particular, it will be convenient (for future
    purposes)  to construct such codewords by choosing  the first $N_L$ nodes,
    in sequence, starting from the left-most one. The constraint that the obtained 
    codewords $w$'s have to satisfy (i.e., that $\ones{w} \leq D$) is 
    automatically satisfied since in the tree $T$ 
    any root-to-leaf path has at most $D$ right-pointing edges, each labeled with $1$.
    
    The nodes at level $L-1$ that are parents of the chosen nodes at level $L$ \textit{cannot} be used anymore in the construction of the code,  
    to comply with the prefix constraint of $C$. Therefore, we need to ignore such nodes at level $L-1$ and their number is exactly $M_L = \lceil N_L/2\rceil.$
    Thus, the number of nodes at level $L-1$ that can be still used is 
    \begin{equation}
        \sum_{i=0}^{D-1} \binom{L-1}{i} + \binom{L-2}{D-1} - M_L.
    \end{equation}
    From (\ref{eq:condition}) and (\ref{eq:M}) we know that
    \begin{equation}
        N_{L-1} \leq \sum_{i=0}^{D-1} \binom{L-1}{i} + \binom{L-2}{D-1} - M_L,
    \end{equation}
    therefore we can iterate the same process to construct the 
    desired number of codewords of length $L-1$. 
    We choose the first $N_{L-1}$ available nodes at level $L-1$ 
    (that are all the nodes at level $L-1$ that 
    give rises to not-prefixes of the chosen nodes at level $L$) in sequence starting from the left-most one. Again,  we have to ignore all the nodes at level $L-2$ 
    that are ancestors of the chosen nodes at level $L$ and $L-1$.
    This number is exactly 
    \begin{equation}
        M_{L-1} = \left\lceil\frac{\displaystyle N_{L-1} + M_{L}}{\displaystyle 2}\right\rceil.
    \end{equation}
  From (\ref{eq:condition}) and (\ref{eq:M}) we know that
    \begin{equation}
         N_j \leq \sum_{i=0}^{D-1} \binom{j}{i} + \binom{j-1}{D-1} - M_{j+1},\quad \forall j =1,\ldots,L.
    \end{equation}
    We can iterate this operation for each level of $T$. Thus, we can construct a prefix code with codeword lengths $\ell_1,\ldots,\ell_n$ in which each
    codeword does not contain more than $D$ ones.
\end{proof}

To clarify the procedure for the construction of the prefix code described in the previous Lemma, let us consider a concrete example.
\begin{example}
    Let $\langle4, 2, 3, 4, 3, 3, 3\rangle$ be the list of lengths associated to the symbols $s_1,\ldots,s_7$ and $D=2$. By the definitions,  we get the  values: $N_4 = 2, N_3=4, N_2 = 1$ and $N_1=0$. Let us first of all see that for such values the inequalities (\ref{eq:condition}) hold. Indeed:
    \begin{align}
        N_4 = 2 &\leq \sum_{i=0}^1 \binom{4}{i} + \binom{3}{1} = 8,\label{eq:N_4}\\
        N_3 = 4 &\leq \sum_{i=0}^1 \binom{3}{i} + \binom{2}{1} - \left\lceil \frac{2}{2}\right\rceil = 5,\label{eq:N_3}\\
        N_2=1 &\leq \sum_{i=0}^1 \binom{2}{i} + \binom{1}{1} - \left\lceil \frac{4 + 1}{2}\right\rceil= 1,\label{eq:N_2}\\
        N_1 = 0 &\leq \sum_{i=0}^1 \binom{1}{i} + \binom{0}{1} - \left\lceil \frac{1 + 3}{2}\right\rceil = 0\label{eq:N_1}.
    \end{align}
    Since the inequalities hold, we can construct the prefix code with the required
    list of lengths $\langle4, 2, 3, 4, 3, 3, 3\rangle$. For such a purpose, let us start from the complete binary tree $T$ of depth $4$  in which, for any node $v$,  the path 
    from the root to the node $v$ does not contain more than $D=2$ edge with label $1$ (Figure \ref{fig:T}).
    
    \begin{figure}[H]
        \centering
        \scalebox{.8}{
        \begin{tikzpicture}[level distance=1.5cm, level/.style={sibling distance=100mm/#1},
        level 3/.style={sibling distance=2cm}, 
        level 4/.style={sibling distance=1cm}]]
        \node[circle,draw] (root) {}
            child {node[circle,draw] (l1) {}
                child {node[circle,draw] {}
                    child {node[circle,draw] {}
                        child {node[circle,draw] {}}
                        child {node[circle,draw] {}}
                    }
                    child {node[circle,draw] {}
                        child {node[circle,draw] {}}
                        child {node[circle,draw] {}}
                    }
                }
                child {node[circle,draw] {}
                    child {node[circle,draw] {}
                        child {node[circle,draw] {}}
                        child {node[circle,draw] {}}
                    }
                    child {node[circle,draw] {}}
                }
            }
            child {node[circle,draw] (r1) {}
                child {node[circle,draw] {}
                    child {node[circle,draw] {}
                        child {node[circle,draw] {}}
                        child {node[circle,draw] {}}
                    }
                    child {node[circle,draw] {}}
                }
                child {node[circle,draw] {}}
            };
    \end{tikzpicture}}
    \caption{The complete binary tree $T$ of depth $4$}
    \label{fig:T}
    \end{figure}
    Since $N_4 = 2$ we need  two nodes at the level $4$ of $T$ to construct the codewords for the symbols $s_1$ and $s_4$, respectively. As described in Lemma \ref{lemma:sufficient}, we take the left-most ones and we get the tree in Figure \ref{fig:T_step_1}. Note that is the inequality (\ref{eq:N_4}) that assures us that we can choose such nodes at the level $4$.
    \begin{figure}[H]
        \centering
        \scalebox{.8}{
        \begin{tikzpicture}[level distance=1.5cm, level/.style={sibling distance=100mm/#1},
        level 3/.style={sibling distance=2cm}, 
        level 4/.style={sibling distance=1cm}]]
        \node[circle,draw] (root) {}
            child {node[circle,draw] (l1) {}
                child {node[circle,draw] {}
                    child {node[circle,draw] {}
                        child {node[circle,draw] {$s_1$}}
                        child {node[circle,draw] {$s_4$}}
                    }
                    child {node[circle,draw] {}
                        child {node[circle,draw] {}}
                        child {node[circle,draw] {}}
                    }
                }
                child {node[circle,draw] {}
                    child {node[circle,draw] {}
                        child {node[circle,draw] {}}
                        child {node[circle,draw] {}}
                    }
                    child {node[circle,draw] {}}
                }
            }
            child {node[circle,draw] (r1) {}
                child {node[circle,draw] {}
                    child {node[circle,draw] {}
                        child {node[circle,draw] {}}
                        child {node[circle,draw] {}}
                    }
                    child {node[circle,draw] {}}
                }
                child {node[circle,draw] {}}
            };
    \end{tikzpicture}}
    \caption{The tree $T$ after choosing the nodes for the codewords of length 4}
    \label{fig:T_step_1}
    \end{figure}
    Now, since $N_3=4$ we need to take four nodes at level $3$ for the construction of the codewords for the symbols $s_3,s_5,s_6$ and $s_7$. However, we have to ignore the nodes that are prefixes of the chosen codewords of length $4$. Note that we can do this since the 
    values $N_i$'s satisfy the required inequalities.
    \begin{figure}[H]
        \centering
        \scalebox{.8}{
        \begin{tikzpicture}[level distance=1.5cm, level/.style={sibling distance=100mm/#1},
        level 3/.style={sibling distance=2cm}, 
        level 4/.style={sibling distance=1cm}]]
        \node[circle,draw] (root) {}
            child {node[circle,draw] (l1) {}
                child {node[circle,draw] {}
                    child {node[circle,draw] {}
                        child {node[circle,draw] {$s_1$}}
                        child {node[circle,draw] {$s_4$}}
                    }
                    child {node[circle,draw] {$s_3$}}
                }
                child {node[circle,draw] {}
                    child {node[circle,draw] {$s_5$}}
                    child {node[circle,draw] {$s_6$}}
                }
            }
            child {node[circle,draw] (r1) {}
                child {node[circle,draw] {}
                    child {node[circle,draw] {$s_7$}
                    }
                    child {node[circle,draw] {}}
                }
                child {node[circle,draw] {}}
            };
    \end{tikzpicture}}
    \caption{The tree after choosing the nodes for the codewords of length $4$ and $3$}
    \label{fig:T_ste_2}
    \end{figure}

    Since $N_2=1$ there is only  one codeword of length 2 that we need to choose. 
    Therefore, we take the left-most node at level $2$ that is still ``available" as shown in Figure \ref{fig:step_2}.
    \begin{figure}[H]
        \centering
        \scalebox{.8}{
        \begin{tikzpicture}[level distance=1.5cm, level/.style={sibling distance=100mm/#1},
        level 3/.style={sibling distance=2cm}, 
        level 4/.style={sibling distance=1cm}]]
        \node[circle,draw] (root) {}
            child {node[circle,draw] (l1) {}
                child {node[circle,draw] {}
                    child {node[circle,draw] {}
                        child {node[circle,draw] {$s_1$}}
                        child {node[circle,draw] {$s_4$}}
                    }
                    child {node[circle,draw] {$s_3$}}
                }
                child {node[circle,draw] {}
                    child {node[circle,draw] {$s_5$}}
                    child {node[circle,draw] {$s_6$}}
                }
            }
            child {node[circle,draw] (r1) {}
                child {node[circle,draw] {}
                    child {node[circle,draw] {$s_7$}
                    }
                    child {node[circle,draw] {}}
                }
                child {node[circle,draw] {$s_2$}}
            };
    \end{tikzpicture}}
    \caption{The tree after choosing the nodes for the codewords of length $4, 3$ and $2$}
    \label{fig:step_2}
    \end{figure}
    Finally, since $N_1 = 0$ we do not need to take any node at level $1$. So, the tree that we get in the end is shown in Figure \ref{fig:final}.

    \begin{figure}[H]
        \centering
        \scalebox{.8}{
        \begin{tikzpicture}[level distance=1.5cm, level/.style={sibling distance=100mm/#1},
        level 3/.style={sibling distance=2cm}, 
        level 4/.style={sibling distance=1cm}]]
        \node[circle,draw] (root) {}
            child {node[circle,draw] (l1) {}
                child {node[circle,draw] {}
                    child {node[circle,draw] {}
                        child {node[circle,draw] {$s_1$}}
                        child {node[circle,draw] {$s_4$}}
                    }
                    child {node[circle,draw] {$s_3$}}
                }
                child {node[circle,draw] {}
                    child {node[circle,draw] {$s_5$}}
                    child {node[circle,draw] {$s_6$}}
                }
            }
            child {node[circle,draw] (r1) {}
                child {node[circle,draw] {}
                    child {node[circle,draw] {$s_7$}
                    }
                    child {node[circle,draw] {}}
                }
                child {node[circle,draw] {$s_2$}}
            };
    \end{tikzpicture}}
    \caption{The final tree after the choice of all codewords}
    \label{fig:final}
    \end{figure}
    The prefix code $w: \{s_1,\ldots,s_7\} \to \{0,1\}^+$ that we get from the tree in Figure \ref{fig:final} is the following: $w(s_1) = 0000$, $w(s_2) = 11$, $w(s_3) = 001$, $w(s_4) = 0001$, $w(s_5) = 010$, $w(s_6) = 011$ and $w(s_7)=100$.

    \hfill$\Box$
\end{example}

\medskip

The condition presented in Lemma \ref{lemma:sufficient} 
is  \textit{only} a sufficient condition
for the existence
of binary prefix codes with a given multiset of 
codeword-lengths, satisfying  the {constraint} that the number of ones in each codeword is  bounded
by the parameter $D$.
If we consider multisets containing codeword-lengths 
of  prefix codes whose tree representation corresponds to a {full} binary tree (i.e.,  for which it holds that $\sum_{i=1}^n 2^{-\ell_i} = 1$ or,
equivalently, any internal node has \textit{exactly} two children \cite{AW}), then 
the condition 
of Lemma \ref{lemma:sufficient}
becomes necessary too, as shown in the following Lemma.
\begin{lemma}\label{lemma:necessary}
    Let $\ell_1,\ldots,\ell_n$ be the codeword lengths of a prefix code $C$ in which each
codeword does not contain more than $D$ ones and whose tree representation corresponds to a {full} binary tree, then the following inequalities hold:
\begin{align}\label{eq:condition_2}
        N_j \leq \sum_{i=0}^{D-1} \binom{j}{i} + \binom{j-1}{D-1} - M_{j+1},\quad \forall j =1,\ldots,L,
    \end{align}
    where $M_j$ and $N_j$ are defined as in Lemma \ref{lemma:sufficient} and $L = \max_{i} \ell_i$.
\end{lemma}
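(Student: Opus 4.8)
The plan is to read the inequalities (\ref{eq:condition_2}) directly off the full binary tree $T_C$ that represents the code $C$, by interpreting both sides as counts of nodes at a fixed level. For each level $j$, let $a_j$ denote the total number of nodes of $T_C$ at level $j$, let $N_j$ denote the number of those nodes that are leaves (this matches the $N_j$ of the statement, since a codeword of length $j$ is exactly a leaf at level $j$), and let $I_j = a_j - N_j$ be the number of internal nodes at level $j$. Because $T_C$ is \emph{full}, every internal node at level $j$ has exactly two children at level $j+1$, while every node at level $j+1$ has a unique (internal) parent at level $j$; hence $a_{j+1} = 2 I_j$ for all $j$, and in particular $a_{j+1}$ is always even.

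The first step is to identify the auxiliary quantity $M_{j+1}$ with the number $I_j$ of internal nodes at level $j$. Since $L = \max_i \ell_i$ is the maximal codeword length, $T_C$ has no node below level $L$, so $I_L = 0 = M_{L+1}$. Proceeding by downward induction on $j$, assume $I_{j+1} = M_{j+2}$. From $a_{j+1} = 2 I_j$ and $a_{j+1} = N_{j+1} + I_{j+1}$ we get $I_j = (N_{j+1} + I_{j+1})/2$; since $a_{j+1}$ is even this is an integer, so $(N_{j+1} + M_{j+2})/2$ is integral and the ceiling in the definition (\ref{eq:M}) is inert, yielding $M_{j+1} = \lceil (N_{j+1} + M_{j+2})/2 \rceil = (N_{j+1} + I_{j+1})/2 = I_j$. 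Thus $M_{j+1} = I_j$ for every $j$, and the fullness of $T_C$ is precisely what makes the ceiling disappear so that the two recursions coincide.

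The second step is to bound the total node count $a_j$ by the right-hand count $\sum_{i=0}^{D-1} \binom{j}{i} + \binom{j-1}{D-1}$. Each node at level $j$ corresponds to the binary string of length $j$ read along its root-to-leaf path, and by the codeword constraint this string contains at most $D$ ones. The strings with at most $D-1$ ones number $\sum_{i=0}^{D-1} \binom{j}{i}$. For a string with exactly $D$ ones, the same argument used in the proof of Lemma \ref{lemma:sufficient} applies: if such a string ended in $0$, its parent would carry $D$ ones and, being internal in the full tree $T_C$, would be forced to have a right child with $D+1$ ones, a contradiction. Hence every level-$j$ node with exactly $D$ ones ends in $1$, and there are at most $\binom{j-1}{D-1}$ of these. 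As distinct nodes give distinct strings, $a_j \leq \sum_{i=0}^{D-1} \binom{j}{i} + \binom{j-1}{D-1}$.

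Combining the two steps gives $N_j = a_j - I_j \leq \sum_{i=0}^{D-1} \binom{j}{i} + \binom{j-1}{D-1} - M_{j+1}$, which is exactly (\ref{eq:condition_2}). I expect the only delicate point to be the first step, namely verifying that $M_{j+1}$ genuinely equals the internal-node count $I_j$: the recursion (\ref{eq:M}) was designed for the bottom-up construction of Lemma \ref{lemma:sufficient}, where one parent is reserved for each pair of already-placed deeper nodes and an odd leftover forces a rounding-up. The crux is to observe that in a full tree no such leftover can occur (the level sizes $a_{j+1}=2I_j$ are even), so the ceiling never rounds up and the reserved-parent count matches the actual internal-node count exactly. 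Everything else reduces to routine counting.
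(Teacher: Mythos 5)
Your proof is correct and follows essentially the same route as the paper's: both interpret the right-hand side as the maximum number of nodes at level $j$ under the $D$-ones constraint and interpret $M_{j+1}$ as the count of level-$j$ nodes blocked by deeper codewords. Your write-up is in fact more explicit than the paper's at the one delicate point --- the paper merely asserts that $M_{j+1}$ equals the minimum number of necessarily-occupied parents, whereas you prove the exact identity $M_{j+1}=I_j$ by observing that fullness forces $a_{j+1}=2I_j$ to be even so the ceiling in (\ref{eq:M}) never rounds up.
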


\begin{proof}
    {First of all, let us recall that 
    \begin{equation}\label{eq:total}
     \sum_{i=0}^{D-1} \binom{j}{i} + \binom{j-1}{D-1}
    \end{equation}
    is the \textit{maximum} number of nodes that a complete binary tree (in which any root-to-leaf path has at most $D$ right-pointing edges) can have at level $j=1,\dots,L$.
    Moreover, since we are considering only {full} binary trees, one can see that $M_j$, for each $j=2,\dots,L$, corresponds to the \textit{minimum} number of nodes at level $j-1$ that are necessarily prefixes of some nodes associated with codewords of length $\geq j$, regardless of how such nodes are chosen. Putting all together, since by hypothesis, $\ell_1,\dots,\ell_n$ are the codeword lengths of a prefix code $C$ in which each codeword does not contain more than $D$ ones and whose binary tree representation is {full}, we have that the number of leaves at level $j$ in the binary tree of $C$, which is $N_j$, \textit{cannot} be greater than 

    \begin{equation}\label{eq:max}
        \sum_{i=0}^{D-1} \binom{j}{i} + \binom{j-1}{D-1} - M_{j+1},
    \end{equation}
    for each $j=1,\dots,L$. This holds because 
    (\ref{eq:max})
    is obtained by subtracting from the total number
    of nodes at level $j$ (counted by (\ref{eq:total})) the minimum 
    number of nodes at level $j-1$ that are necessarily prefixes of some nodes associated with codewords of length $\geq j$ (counted by $M_{j+1}$).
    In other words, (\ref{eq:max})
    represents the maximum number of leaves at level $j=1,\dots,L$,
    that a complete binary tree --- where any root-to-leaf path has at most $D$ right-pointing edges --- has still \lq\lq available" (i.e., they are not prefixes of the nodes associated with codewords of length $\geq j$). 
    }
\end{proof}

\Section{Optimal Average-Length Constrained Trees}\label{dolev}
In this section, we turn our attention to the problem 
 considered by Dolev \textit{et al.} in \cite{Dolev};  we 
 recall it here for the reader's convenience.
The problem consists in determining 
the {minimum} possible \textit{average} height of any $n$-leaf binary tree
that satisfies the following constraints:
\textbf{a)} any
    root-to-leaf path contains at most $D$ \textit{right}--pointing edges,
\textbf{b)} the leaves of the tree,  (\textit{not} necessarily read from the leftmost
              to the rightmost leaf), are in one-to-one correspondence
             with the elements of the set $[n]$,
\textbf{c)} the  set  $[n]$ is endowed with a probability
         distribution $p=(p_1, \ldots , p_n)$.
By using the known equivalence between binary prefix codes and binary trees, the 
problem is equivalent to the following optimization problem: 
\begin{align}\label{eq:problemDolev}
    \min_{C \mbox{ \footnotesize prefix}}\mathbb{E}[C]&= 
    \min_{C \mbox{ \footnotesize prefix}}\sum_{i=1}^n p_i\,\ell_i,\\
    \mbox{subj. to } \ones{w(s_i)} \leq D, &\quad \forall i=1,\dots,n,\nonumber
\end{align}
where $\ell_i$ is the length of $w(s_i)$. 
We shall prove that the problem (\ref{eq:problemDolev}) 
can be solved in $O(n^2D)$ time, improving on the $O(n^{2+D})$ time algorithm given in \cite{Dolev}. 

The essence of our proof consists in showing that, under the \textit{special} 
condition that the probability distribution $p$ is ordered in a non-decreasing fashion, 
then 
the value of an  optimal solution to the problem (\ref{eq:problemDolev})  
is equal to the value of an optimal solution to (\ref{eq:problem}).

\begin{theorem}\label{th:opt}
Let $S=\{s_1,\ldots,s_n\}$ be a set of symbols, $p=(p_1,\dots,p_n)$ be
    a probability distribution on $S$ such that $p_1\leq\ldots \leq p_n$, and let
    $D$ be a strictly positive integer.
    Then, there exists an optimal prefix code $C$ for $S$ and $p$ such that:
    \textbf{1)} each codeword of $C$ does not contain more than $D$ 1's,
     \textbf{2)}  $\ell_1\geq\ldots\geq \ell_n$, where $\ell_i$ is the length of $w(s_i)$, and 
     \textbf{3)}
 in the tree representation of $C$ the leaves associated to the symbols $s_1,\ldots,s_n$ appear consecutively, from left to right.  
\end{theorem}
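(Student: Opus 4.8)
The plan is to start from an arbitrary optimal prefix code $C^\star$ for the problem (\ref{eq:problemDolev}), which by definition already satisfies property \textbf{1)}, and to massage it, without increasing its average length and without ever violating the bound on the number of $1$'s, until it also satisfies \textbf{2)} and \textbf{3)}. As a harmless preprocessing step I would first argue that we may assume the tree of $C^\star$ to be \emph{full}: if some internal node had a single child, promoting the subtree hanging from it would strictly shorten every codeword below that node, strictly decreasing $\mathbb{E}[C^\star]$ (and only \emph{removing} edges, hence $1$'s, from the affected paths), contradicting optimality.

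For property \textbf{2)} I would use the classical Huffman-style exchange argument. Suppose $p_i \le p_j$ but $\ell_i < \ell_j$; swapping the two codewords assigned to $s_i$ and $s_j$ changes the average length by $(p_i-p_j)(\ell_j-\ell_i)\le 0$, so it cannot increase the cost, and since $C^\star$ is optimal the cost stays equal. Crucially, such a swap only \emph{relabels} which symbol receives which codeword: the multiset of codewords is untouched, so each codeword still contains at most $D$ ones and property \textbf{1)} is preserved. Repeating these swaps sorts the lengths so that $\ell_1 \ge \cdots \ge \ell_n$ matches $p_1 \le \cdots \le p_n$.

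The delicate part is property \textbf{3)}, i.e. realizing the very same length multiset by a \emph{left-justified, full} tree whose leaves, read from left to right, have \emph{non-increasing} depth; once this is done, assigning $s_1,\dots,s_n$ to the leaves in left-to-right order simultaneously yields \textbf{3)} and re-establishes \textbf{2)}. Here I would invoke the two Kraft-like results of Section~\ref{sec:condition}: since the tree of $C^\star$ is full and respects the $D$-bound, Lemma~\ref{lemma:necessary} guarantees that its lengths satisfy the inequalities (\ref{eq:condition}); Lemma~\ref{lemma:sufficient} then lets me \emph{rebuild} a tree $T'$ with exactly the same lengths and the same $D$-bound by the greedy procedure in its proof, which places the longest codewords first and always picks the left-most still-available nodes. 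The key claim I have to verify is that this left-most-first, longest-first allocation produces a tree in which leaf depths are non-increasing from left to right (as in the worked example, whose final tree reads $4,4,3,3,3,3,2$); granting this, the left-most leaf is deepest and must receive the least likely symbol $s_1$, and so on, giving both \textbf{2)} and \textbf{3)}.

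Finally I would check that $T'$ is still optimal. Because $T'$ uses the same multiset of lengths as the sorted $C^\star$ and pairs the smallest probabilities with the largest lengths, the rearrangement inequality shows its cost is the minimum over all assignments of that multiset, hence at most $\mathbb{E}[C^\star]$; as $T'$ is a legal code for (\ref{eq:problemDolev}) its cost is also at least the optimum, so equality holds. I expect the genuine obstacle to be the monotonicity-of-depths claim for the construction of Lemma~\ref{lemma:sufficient}; everything else is standard exchange and rearrangement bookkeeping. This also yields the intended consequence that, for non-decreasing $p$, the optimum of the unordered prefix problem (\ref{eq:problemDolev}) coincides with the optimum of the alphabetic problem (\ref{eq:problem}).
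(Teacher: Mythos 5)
Your proposal is correct and follows essentially the same route as the paper: a standard exchange argument for properties \textbf{1)} and \textbf{2)}, then Lemma~\ref{lemma:necessary} applied to the full optimal tree to obtain the inequalities (\ref{eq:condition}), and the leftmost-first, longest-first construction from the proof of Lemma~\ref{lemma:sufficient} to realize the same length multiset with consecutive left-to-right leaves. The one step you flag as the genuine obstacle --- that this greedy construction yields non-increasing leaf depths from left to right --- is exactly the step the paper also settles only by appeal to the construction, so your treatment is, if anything, slightly more explicit (e.g., the fullness preprocessing and the rearrangement-inequality check) than the published proof.
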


\begin{proof}
The existence of an optimal (i.e., with minimum average length) prefix code $C'$ 
in which the codeword lengths are ordered in opposite fashion with respect to the
probabilities ordering is obvious. Therefore, properties \textbf{1)} and \textbf{2)}
are true.
  Moreover, since $C'$ is an optimal code, its tree representation is a {full} binary tree. Therefore, from Lemma \ref{lemma:necessary} we have that the 
  codeword lengths $\ell_1,\ldots,\ell_n$ of $C'$ satisfy the inequalities
   \begin{equation}\label{eq:inequalities}
        N_j \leq \sum_{i=0}^{D-1} \binom{j}{i} + \binom{j-1}{D-1} - M_{j+1},\quad \forall j =1,\ldots,\ell_1.
   \end{equation}
   On the other hand, from Lemma \ref{lemma:sufficient} we know that there exists a prefix code $C$ for $S$ with codewords lengths $\ell_1,\ldots,\ell_n$ such that in the tree representation of $C$  the leaves associated to the symbols $s_1,\ldots,s_n$ appear \textit{consecutively} from left to right (for more details, please see the argument right after (\ref{eq:construct}) in the proof of Lemma \ref{lemma:sufficient}). More explicitly, in the constructive proof of Lemma \ref{lemma:sufficient} we build the code tree $T$ by always taking the leftmost available nodes, starting from the lowest level suitable node, and proceeding iteratively according to the decreasing order of 
    the $\ell_1,\ldots,\ell_n$. This concludes the proof.
\end{proof}

The \textit{crucial}  consequence of Theorem \ref{th:opt} is that one can proceed as follows
to solve the optimization problem (\ref{eq:problemDolev}).
First, one observes that the value of a solution to (\ref{eq:problemDolev}) 
\textit{does not} depend
on the componentwise ordering of the probability distribution 
$p=(p_1,\dots,p_n)$ (this is analogous  to the classical 
Huffman problem to find a minimum average length prefix code,
for example). Therefore, we can assume that 
$p_1\leq\ldots \leq p_n$. Under this assumption, Theorem \ref{th:opt}
asserts that there exists a binary tree $T$
representing an \textit{optimal} solution to (\ref{eq:problemDolev})
in which the symbols $s_1, \ldots, s_n$ (that is, ordered according to decreasing probabilities) appear on the leaves of $T$ consecutively, from 
left to right. 
Said equivalently, $T$ represents an \textit{optimal} alphabetic code for the (now ordered)
set of symbols $\{s_1, \ldots, s_n\}$. We have shown in Section \ref{sec:improved}
that we can find an optimal alphabetic code in time $O(n^2D)$, for \textit{any} arbitrary
(but otherwise \textit{fixed}) ordering on the set of symbols $\{s_1, \ldots, s_n\}$.
All together, this shows that an optimal   solution to the  problem (\ref{eq:problemDolev})
can be found in time $O(n^2D)$ by first ordering the 
probability distribution and then running  the 
algorithm presented Section \ref{sec:improved}. 

\Subsection{When the constraint is really a constraint?}\label{sec:opt1:constraint}

The constraint that \textit{at most} $D$ ones must appear in each
 codeword might prevent the construction of the (otherwise unrestricted) optimal tree. However, if $D$ is large enough, the best tree satisfying the constraint can be the same as the optimal (unrestricted) one.
For the case of alphabetic trees establishing the smallest value of $D$ that does not prevent
the construction of the optimal tree is straightforward. Indeed, it is easy to identify probability
distributions (for example, a decreasing dyadic distribution) for which \textit{any} 
optimal alphabetic code 
{must} have a codeword with $n-1$ ones. Hence, any value of $D$ strictly less than $n-1$ is
a constraint that \textit{prevents} the construction of a tree with the same cost of an \textit{optimal
unrestricted} one.

For the case of prefix codes, establishing a similar result is not immediate. In the next theorem
we prove that the smallest value of $D$ that does not prevent the construction of a tree whose
cost is as the cost of an optimal unrestricted tree is $D=\log_2 n$.

\begin{theorem}
    Let $S=\{s_1,\ldots,s_n\}$ be a set of symbols and $p=(p_1,\dots,p_n)$ a probability distribution on $S$ with $p_1\leq \dots \leq p_n$. There exists an optimal prefix tree $T^*$ for $S$, in which each codeword does not contain more than $\log_2 n$ ones.
\end{theorem}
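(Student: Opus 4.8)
The claim is that for any probability distribution there exists an optimal prefix tree in which every codeword has at most $\log_2 n$ ones. The plan is to show that we never need more than $\log_2 n$ right-pointing edges on any root-to-leaf path, by exhibiting an optimal \emph{unrestricted} prefix tree that already respects this bound, or by modifying one that does not. Since Theorem \ref{th:opt} lets us assume $p_1 \leq \ldots \leq p_n$ and work with the tree in which leaf depths are non-increasing and the leaves appear consecutively from left to right, I would fix attention on such a canonical optimal tree $T^*$.

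The first step is to bound the number of right edges on the \emph{longest} relevant path. The key structural observation I would try to exploit is the following: in a full binary tree, the number of right-pointing edges on the path to any node is at most the number of leaves strictly to the left of that node's subtree (each right turn ``skips over'' at least one left subtree, which itself contains at least one leaf). More carefully, if a root-to-leaf path contains $r$ right edges, then at each right turn the abandoned left sibling subtree contains at least one leaf, and these leaf-sets are disjoint; hence a path with $r$ ones forces at least $r$ leaves to its left. Combined with the total leaf count $n$, this already gives $r \leq n-1$, which is too weak, so the real work is to sharpen this to a logarithmic bound.

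The sharpening I would pursue uses optimality together with the consecutive-leaves property. The plan is to argue that in an optimal tree the right subtrees along any path must shrink geometrically: if a path took $\log_2 n$ right turns, the subtrees hanging off those turns would have to collectively account for enough probability mass and leaves that, by an exchange argument, one could rebalance the tree to strictly decrease the average length — contradicting optimality. Concretely, I would suppose for contradiction that some optimal $T^*$ has a path with more than $\log_2 n$ right edges, locate the deepest such violation, and perform a local rotation (or a subtree swap using the non-increasing-depth and sorted-probability structure) that reduces $\mathbb{E}[C]$ without increasing it elsewhere, invoking the Kraft-type feasibility from Lemma \ref{lemma:sufficient} to guarantee the rearranged lengths remain realizable under the $D$-ones constraint.

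The main obstacle I anticipate is making the counting argument tight at exactly $D = \log_2 n$ rather than some looser constant multiple: the naive ``each right turn costs one leaf'' bound gives only $O(n)$, while a geometric-shrinkage argument must be justified by optimality rather than by pure combinatorics, since a \emph{non}-optimal full tree certainly can have paths with $n-1$ right edges. I therefore expect the heart of the proof to be an exchange/rebalancing lemma showing that any optimal tree can be transformed into one whose longest right-chain is logarithmic, and verifying (via Lemma \ref{lemma:sufficient} and \ref{lemma:necessary}) that this transformation stays within the class of feasible length-multisets. A cleaner alternative worth trying first is to directly construct a balanced optimal tree and observe that balancing caps the right-edge count at $\lceil \log_2 n \rceil$ by the standard depth bound of a balanced binary tree, then confirm this balanced tree is itself optimal for the unrestricted problem.
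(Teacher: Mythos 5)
Your plan correctly identifies the crux --- the naive ``each right turn abandons at least one leaf'' count gives only $r\le n-1$, and some form of geometric shrinkage of the right subtrees is needed --- but it never supplies the mechanism, and the two mechanisms you sketch do not work as stated. The fallback of ``construct a balanced optimal tree'' fails outright: for a skewed (e.g.\ decreasing dyadic) distribution the optimal unrestricted code tree has depth $n-1$ and is maximally unbalanced, so a balanced tree is not optimal and cannot serve as the witness. The main route, an exchange/rotation argument penalizing any path with more than $\log_2 n$ right turns, is left entirely as a black box: you do not exhibit the rotation, nor show it strictly decreases $\mathbb{E}[C]$, and no such strictly-improving move can exist in general, because the theorem only claims that \emph{some} optimal tree avoids long right chains --- other optimal trees may well contain them, so a tree with a long right chain need not be improvable.

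The paper obtains the geometric shrinkage for free from the canonical form you already set up, and closes the argument by induction on $n$ with no exchange step at all. Starting from an optimal tree $T^*$ in which $\ell_1\ge\cdots\ge\ell_n$ and the leaves appear consecutively from left to right (which exists by Theorem \ref{th:opt}, since for sorted probabilities an optimal alphabetic code is an optimal prefix code), let $T_1,T_2$ be the left and right subtrees of the root. The non-increasing-depth, consecutive-leaf structure forces $|T_1|\ge|T_2|$: if $|T_1|<|T_2|$, the highest leaf of $T_1$ would sit strictly higher than the lowest leaf of $T_2$, contradicting the depth ordering. Hence $|T_2|\le n/2$, and by the inductive hypothesis applied to $T_1$ and $T_2$ every codeword of $T^*$ contains at most $\max\{\log_2 |T_1|,\ \log_2 |T_2|+1\}\le\max\{\log_2 n,\ \log_2 (n/2)+1\}=\log_2 n$ ones, the $+1$ accounting for the right edge at the root. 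This is exactly the shrinkage phenomenon you were reaching for, but it follows from the canonical ordering plus induction rather than from any rebalancing; to complete your proposal, replace the unproved exchange lemma with this inductive decomposition at the root.
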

\begin{proof}
    We prove the theorem by induction.

    \smallskip
    \noindent
    {\em Base case.} For $n\leq 3$ one can see that the theorem holds, as shown in Figure \ref{fig:cases}.
    
    \begin{figure}[H]
      \centering
      \begin{subfigure}[b]{0.4\textwidth}
        \centering
        \scalebox{.8}{
        \begin{tikzpicture}[level distance=0.7cm, level/.style={sibling distance=10mm/#1}]
        \node[circle,draw] (root) {$s_1$};
        \end{tikzpicture}}
        \caption{Optimal prefix tree for $n=1$}
        
      \end{subfigure}
      \hfill
      \begin{subfigure}[b]{0.4\textwidth}
        \centering
        \scalebox{.8}{
        \begin{tikzpicture}[level distance=0.7cm, level/.style={sibling distance=10mm/#1}]
        \node[circle,draw] (root) {}
            child { node[circle,draw]{
                $s_1$
            }
            }
            child { node[circle,draw]{
                $s_2$
            }};
        \end{tikzpicture}}
        \caption{Optimal prefix tree for $n=2$}
    \hfill
      \end{subfigure}
      \begin{subfigure}[b]{0.4\textwidth}
        \centering
        \scalebox{.8}{
        \begin{tikzpicture}[level distance=0.7cm, level/.style={sibling distance=30mm/#1}]
        \node[circle,draw] (root) {}
            child { 
                node[circle,draw] {}
                child {
                    node[circle,draw]{$s_1$}
                }
                child {
                    node[circle,draw]{$s_2$}
                }
            }
            child { node[circle,draw]{
                $s_3$
            }};
        \end{tikzpicture}}
        \caption{Optimal prefix tree for $n=3$}
        
      \end{subfigure}
      \caption{}
      \label{fig:cases}
    \end{figure}

\noindent
{\em Inductive step.} Let $T^*$ be an optimal prefix tree for $S$ with $\ell_1\geq\ldots\geq 
\ell_n$, where
\begin{itemize}
    \item $\ell_i$ is the length of the codeword  $w(s_i)$ associated to 
symbol $s_i$, 
\item the leaves of $T^*$, associated to the symbols 
$s_1,\ldots,s_n$, appear consecutively, from left to right. 
\end{itemize} 
We note that such a tree surely exists. Indeed,  
 when the source symbol probability distribution 
 $p=(p_1,\dots,p_n)$  is ordered, an optimal alphabetic code for 
$S$ is also an optimal prefix code for $S$. Let $T_1$ and $T_2$ be the left and the right subtree 
of the root of $T^*$. By the inductive hypothesis, we have that each codeword of $T_1$ does not 
contain more than $\log_2 |T_1|$ ones, where $|T_1|$ is the number of leaves of $T_1$. Similarly, 
each codeword of $T_2$ does not contain more than $\log_2 |T_2|$ ones, where $|T_2|$ is the number 
of leaves of $T_2$. Moreover, since $T^*$ is full, and since $\ell_1\geq \dots \geq \ell_n$ and 
the leaves associated with the symbols $s_1,\ldots,s_n$ appear consecutively, from left to right, 
one gets that $|T_1|\geq |T_2|$. In fact, let us assume by contradiction that $|T_1|<|T_2|$. Let 
$x$ be the depth of the highest leaf in $T_1$ and let $y$ be the depth of the lowest leaf
in $T_2$. Since the trees are full, we have that $x\leq \log_2 |T_1|$ and that
$y\geq \log_2 |T_2|$. From this, we get that

$$
    y\geq \log_2 |T_2| > \log_2 |T_1| \geq x.
$$
However, $y>x$ leads to a contradiction since $\ell_1\geq\dots\geq \ell_n$.

Putting it all together, we get that each codeword of $T^*$ does not contain a number of
1's greater than 

\begin{align*}
        \max\{\log_2 |T_1|, \log_2 |T_2|+1\} &\leq \max\{\log_2 |T_1|, \log_2 \frac{n}{2} + 1\}\quad\mbox{(since $|T_2|\leq|T_1|$)}\\
        &\leq \log_2 n\quad\mbox{(since $|T_1|+|T_2|=n)$}
\end{align*}

This concludes the inductive step and thus the proof.
\end{proof}

\Section{Concluding Remarks}

In this paper, we studied problems that can be seen as search problems in which 
\textit{different} test outcomes (for the \textit{same} test) can have different costs. Specifically, 
a '{\sc No}' test outcome has a cost of 1, while a '{\sc Yes}' test outcome has a 
cost of 0. We sought search strategies with the minimum average number of tests, 
under the constraint that the employed algorithm never incurs a total cost more 
than a given parameter $D$ for any single execution on input instances.
We pointed out several scenarios where our problems naturally occur. 
\remove{
There are many other situations where our framework applies. For instance, 
in comparison-based search procedures, a test that overestimates the unknown number 
might correspond to a destructive --- and therefore, costly --- test
(consider an algorithm that finds the rupture threshold of a material, for example). 
Membership-based search procedures with variable test-outcome costs also arise in 
\cite{P+,T+}. In that scenario, the presence (or absence) of the sought 
element $x$ in the tested subset $A$ of the search space $S$ might cause the release 
of hazardous material as a consequence of the test on $A$. 
In \cite{Elser} it is considered the problem where one uses 
 \textit{caviae} to test for toxicity of a sample. In the case no toxicity is found, a cavia can be "reused", as opposed to the case where toxicity is present, leading to the death of the cavia. 
 
 In general, 
different test outcomes (for the same test) can have different costs and the problem
of designing a search procedure that never incurs a total cost more than a given 
parameter $D$ naturally arises. }

In this paper, we have provided efficient algorithms for the stated problems, 
extending and considerably improving on the known results in the literature. 
An interesting line of research is to develop our approach in such a way as 
to handle arbitrary variable test outcome costs (i.e., not necessarily 0/1). 
While the extension of the results of Section \ref{sec:dynamic} is  feasible, 
extending the results of Section \ref{dolev} to the more general scenario is much
more challenging. The source of difficulties seems to lie in the extension of 
the combinatorial analysis carried out in Section \ref{sec:condition} to the general 
variable test outcome costs. We leave this extension to future investigations.



\newpage

\end{document}